\theoremstyle{plain}
\newtheorem{theorem}{Theorem}[section]
\theoremstyle{definition}
\newtheorem{definition}{Definition}[section]
\theoremstyle{remark}
\begin{document}

\title{HiDi: An efficient reverse engineering schema for large scale dynamic regulatory network reconstruction using adaptive differentiation}
\author{Yue Deng$^{1,2}$, Hector Zenil$^{1,2}$, Jesper Tegn\'er$^{2,3}$ and Narsis A. Kiani$^{1,2}$\thanks{N.A.K. is co-first author. Corresponding: hector.zenil@ki.se and narsis.kiani@ki.se}\bigskip \\
$^{1}$ Information Dynamics Lab, $^{2}$ Unit of Computational Medicine,\\ Center for Molecular Medicine, Department of Medicine,\\ Solna and Science for Life Laboratory\\ (SciLifeLab), Karolinska Institute, Stockholm, Sweden\\
$^3$ Biological  and  Environmental  Sciences  and  Engineering  Division,\\ Computer,  Electrical  and  Mathematical  Sciences  and  Engineering\\  Division, King  Abdullah  University  of  Science  and  Technology\\(KAUST), Thuwal,  Kingdom  of  Saudi  Arabia}

\maketitle

\begin{abstract}

\noindent{\textbf{Motivation:} The use of differential equations (ODE) is one of the most promising approaches to network inference. The success of ODE-based approaches has, however, been limited, due to the difficulty in estimating parameters and by their lack of scalability. Here we introduce a novel method and pipeline to reverse engineer gene regulatory networks from gene expression of time series and perturbation data based upon an improvement on the calculation scheme of the derivatives and a pre-filtration step to reduce the number of possible links. The method introduces a linear differential equation model with adaptive numerical differentiation that is scalable to extremely large regulatory networks.\\
\textbf{Results:} We demonstrate the ability of this method to outperform current state-of-the-art methods applied to experimental and synthetic data using test data from the DREAM4 and DREAM5 challenges. Our method displays greater accuracy and scalability. We benchmark the performance of the pipeline with respect to data set size and levels of noise. We show that the computation time is linear over various network sizes. \\
\textbf{Availability:} The Matlab code of the HiDi implementation is available at: www.complexitycalculator.com/HiDiScript.zip \\
\textbf{Supplementary information:} Supplementary data are available at \textit{Bioinformatics}
online.}

\end{abstract}

\section{Introduction}

In recent years, we have witnessed remarkable advances in measurement technologies associated with data-acquisition techniques in many areas, including biological domains and engineering. We can simultaneously measure the expression levels of thousands of genes under different experimental conditions. Techniques to extract useful information from big data are needed. A fundamental reason for the difficulty in attempting to analyze many large scale data sets is low measurement density~\cite{siegenthaler}. Such datasets are high-dimensional and consist of a small number of points in a high-dimensional space, and it can be extremely challenging to find structures in such data. 

Networks provide a fundamental setting for representing and interpreting information. Network inference deals with the reconstruction of molecular networks directly from experimental data. Data is becoming ever larger, more dynamic, heterogeneous, noisy and incomplete, exacerbating the difficulty of the network inference challenge. For example, in the construction of genetic regulatory networks, we can simultaneously measure the expression of thousands of genes, but the number of experiments we can perform is limited when set against the combinatorial explosion of possible testable conditions. 
What sometimes makes an approximate inference possible in such situations is that many such systems in the real world have characteristic properties that allow them to be represented or approximated with a much smaller number of parameters than the dimensions of the phase space. 
On the one hand, there is a wide variety of different approaches available that can be used to infer molecular networks from experimental data, but a reliable network inference method remains a challenge and an open area of intensive research. Mathematical gene regulation network (GRN) models range from logical models with only Boolean values to continuous ones including detailed biochemical interactions (\cite{LIS,nar1,GENIE2010,TIGRESS2012}). 
On the other hand, logical models require fewer biological details and are computationally more efficient but also display limited dynamic behavior. In contrast, concrete models can describe more details of specific network dynamics but the computational cost of determining parameters becomes intractable. 

\begin{figure*}[!htbp]
\centering
 \includegraphics[width=1\linewidth]{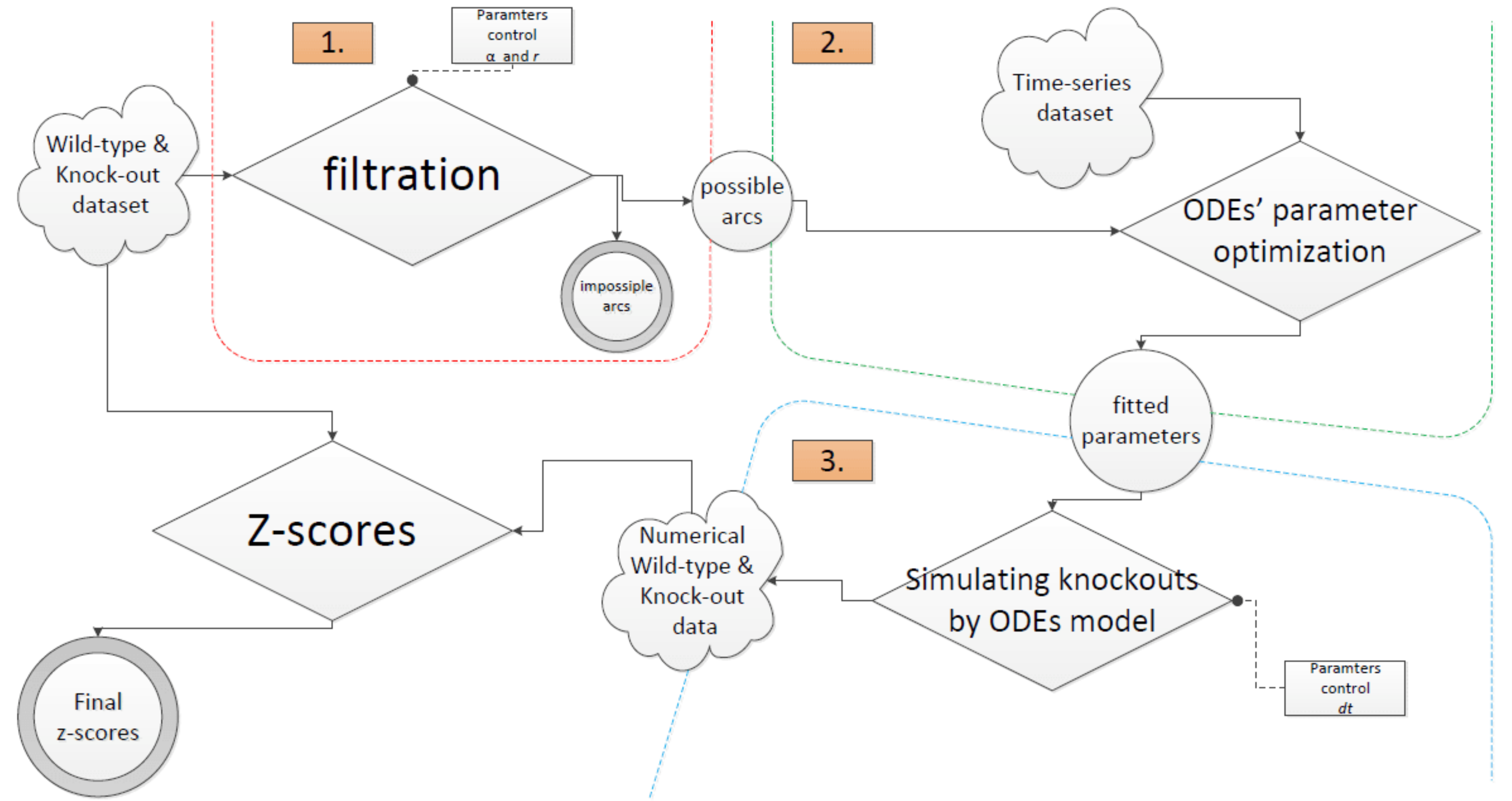}
  \caption{HiDi: Network inference pipeline including  pre-filtration, parameter identification and network inference.}
  \label{pipeline_graph}
\end{figure*}
In brief, given $N$ molecular elements, the challenge of network reconstruction is to decipher the complex interplay of the interacting molecules among an exponential number of possible topologies (all element interactions). For example, Median-Corrected Z-Scores~\cite{Greenfield2010}, Context Likelihood of Relatedness (CLR)~\cite{Aviv2010}, etc. can be applied to extract information about network topology from steady-state data. Methods based on steady-state data have the inherent weakness of being unable to readily distinguish between direct interactions and indirect interactions, because by the time the steady state is established the initial perturbation will have spread into the network. 
One of the most promising approaches for addressing this problem is the use of differential equations~\cite{Madar2009, Yip2010, James2014}. Here we interpret the problem of high-dimensional inference within the framework of linear differential equations. The main goal is to develop a flexible and principled pipeline for uncovering hidden structure underlying high-dimensional, complex data. Here, Ordinary Differential Equation (ODE) models are used to represent the dependence of the concentration of one molecular species on other molecular species. The linear ODE model~\cite{Madar2009}, nonlinear ODE model~\cite{Yip2010} and nonparametric additive ODE model~\cite{James2014} have been developed to cope with time-series (dynamic) data. These methods have the ability to detect transient perturbations in a network, but they require a large number of parameters to be determined. There are also other methods based on machine learning~\cite{raey}, singular value decomposition (SVD)~\cite{Yeung30042002}, Bayesian networks~\cite{10.1371/journal.pone.0029165}, etc., all with different limitations and degrees of success.

 In this paper we introduce a novel method with which to reverse-engineer gene regulatory networks from time-course perturbation data with greater accuracy and scalability than has been possible using other methods. Our method uses a linear differential equation model with adaptive numerical differentiation to identify extremely large regulatory networks. The method is computationally very cheap and the size of the network is no longer of paramount concern, both with respect to computational cost and data limitation. 
Furthermore, we address another challenge, the huge number of possible topologies, by embedding a filtration step within the method to reduce the number of free parameters before simulating the dynamical behavior. The latter is used to produce more information about the network's structure. We evaluate our method on simulated data, and study its performance with respect to data set size and levels of noise on a 1565-gene E.coli gene regulatory network. We show the computation time over various network sizes and estimate the order of computational complexity. We then use our method to study 5 networks in the benchmark collection DREAM4 Challenge. We show that our method outperforms current state-of-the-art methods applied to synthetic data and allows the reconstruction of bio-physically accurate dynamic models from noisy data, the only (and important) caveat being that it requires a large number of perturbation experiments.

\section{Methods}

\subsection{Differential Equations Model}

We represent a \textit{genetic regulatory network} (GRN) as a directed graph $G = (V, E)$, with nodes $V = \{v_1, \ldots, v_n\}$ corresponding to genes and directed edges $E = \{e_1, \ldots, e_m\}$ corresponding to regulatory interactions. An edge from vertex $i$ to vertex $j$ indicates that the expression of gene $i$, $x_i$, influences the expression of gene $j$, either by activating or by inhibiting it. We use a linear ODE model to model the dynamics of a GRN, that is, we assume that different regulators act independently, such that the total effect on the expression of gene $i$ can be written as the sum of the individual effects. This is clearly an oversimplification, and can be generalized by considering the products of the effects of different genes.

The ODE model can be written as:
\begin{equation} \label{ODE_Net2}
\frac{d\mathbf x}{dt} = \mathbf{a_0}+\mathbf{Ax}
\end{equation}
\noindent where $\mathbf {x} \in \mathbb {R}^{n\times 1},\mathbf{a_0}\in \mathbb R^{n\times 1},\mathbf A\in \mathbb R^{n\times n} $.\\
The \textit{basal synthesis} and \textit{degradation rate} for each gene is represented by $a_{ii}$. $X(t)$ is the concentration of genes at time t. $a_{ij}$ denotes the regulation strength of component $x_i$ on $x_j$.
$a_{ij}> 0$ corresponds to an activation, $a_{ij} < 0$ to an inhibition, and $a_{ij} = 0$ means that there is no regulation of gene $j$ by gene $i$. 

Even though using a linear ODE model is an oversimplification in a first approximation, its advantage is that the number of parameters is smaller as compared to non-linear models, and therefore the parameter values can be more accurately estimated, leading to more reliable descriptions of the dynamics of the gene regulatory networks, as we will show. This allows the method to scale up to a very large system. 

Traditionally, experimental data are organized in a matrix where the effect of perturbation of one gene in time  $\mathbf{x}^{t_i} \in \mathbb R^n,i=1,...,T $, and it can be encoded as follows:

\begin{equation*}
\mathbf X = \left[ \begin{array}{c} \mathbf x^{t_1} \\ \vdots \\ \mathbf x^{t_T} \end{array} \right] \in \mathbb R^{T\times n}
\end{equation*}
Applying different perturbations to the same gene leads to \textit{replicates}, of which the $r$-th replicate can be denoted as:

$$
\mathbf X_r \in \mathbb R ^{T\times n},
$$ 

Therefore, all $R$ replicates can be denoted in a series of matrices:

$$
\mathbf X_1,\mathbf X_2,...,\mathbf X_r,...,\mathbf X_R
$$

We employ the same notation for the observations of $\frac{d\mathbf x}{dt}$  in one perturbation experiment. Therefore:

\begin{equation*}
\mathbf Y = \left[ \begin{array}{c} \left.\frac{d\mathbf x}{dt}\right\vert_{t_1}\\ \vdots \\ \left.\frac{d\mathbf x}{dt}\right\vert_{t_T}\end{array} \right] \in \mathbb R^{T\times n};
\end{equation*}

as well as $R$ replicates:

$$
\mathbf Y_1,...,\mathbf Y_r,...,\mathbf Y_R.
$$\

Using these notations, we will reformulate the problem of network reconstruction as an optimization problem.

\subsection{Parameter estimation of ODE systems}

The estimation of model parameters from experimental time series data for differential equation models is typically carried out iteratively in two steps: (1) use of integration schemes to numerically generate solutions $x(t)$ for given parameters, and then (2) comparison of the model's prediction with the experimental data to calculate the error. 

\begin{figure}[!htbp]
\centering
\includegraphics[width=1\linewidth]{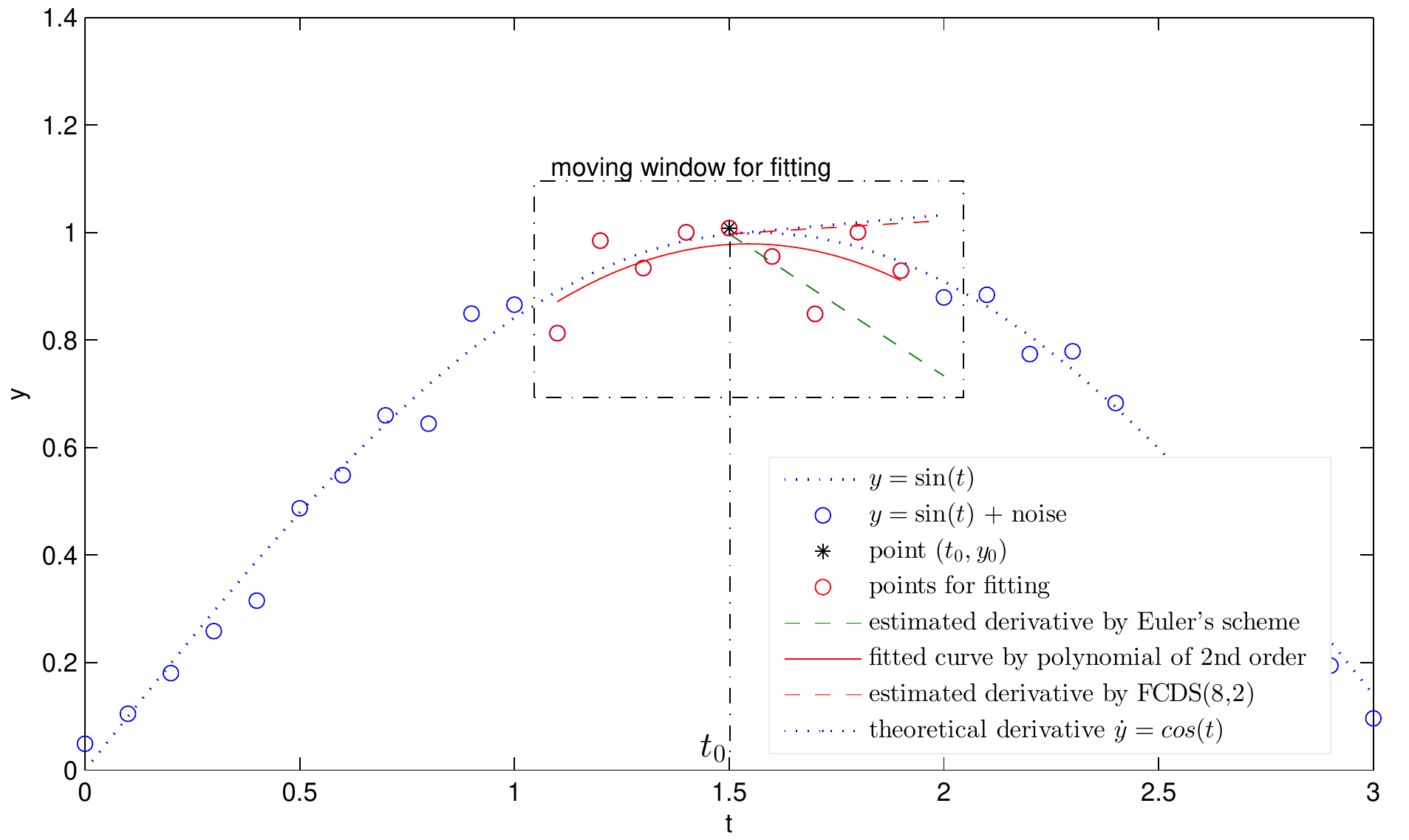}
\caption{An illustration of the derivative estimated by polynomial fitting using different schemes. The curve is $y=\sin(t)$, with theoretical derivative $\dot{y} = \cos(t)$ (blue dash line). The derivative at time point $t = t_0$ is estimated from the noisy data with $\sigma_0 = 0.1$ by fitting a 2nd order polynomial locally around 8 neighbors, i.e. by FCDS(8,2) (red dash line) and by Euler's scheme (green dash line).}
 \label{deri_ex}
\end{figure}

Initial values and model parameters are then modified to minimize this error. Prominent techniques, common in the literature, are the \textit{least square} (LS) and \textit{Kalman filtering methods}. When the number of parameters to be estimated increases, the \textit{loss of lock} problem grows very fast~\cite{strebel}. To circumvent this, here we convert the parameter identification problem in the ODE system into a minimization problem, but using \textit{Frobenius' norm}, denoted by $\|\mathbf . \|_F $.\\

The minimization problem can be written as: 

\begin{center}
\begin{equation}
J(\tilde{\mathbf A}) = \frac{1}{2R}\|\mathbf{D_y} - \mathbf{D_x} \tilde{\mathbf A}^T\|_F^2.
\end{equation}
\end{center}
where
$$ 
\begin{array}{ccc}
 \mathbf{D_y}=\left[\begin{array}{c}\mathbf Y_1\\ \vdots \\ \mathbf Y_r\\ \vdots\\ \mathbf Y_R\end{array}\right] & and &  \mathbf{D_x}=\left[\begin{array}{cc}\mathbf 1 &\mathbf X_1\\ \vdots & \vdots \\ \mathbf 1 & \mathbf X_r\\ \vdots & \vdots \\ \mathbf 1 & \mathbf X_R\end{array}\right]
\end{array}.
$$

From now on we refer to $\mathbf{D_y}$ and $\mathbf{D_x}$  as \emph{Derivative} matrix and \emph{Design} matrix respectively.

\begin{figure*}[htpb]
\begin{subfigure}{0.5\textwidth}
\includegraphics[scale=0.4]{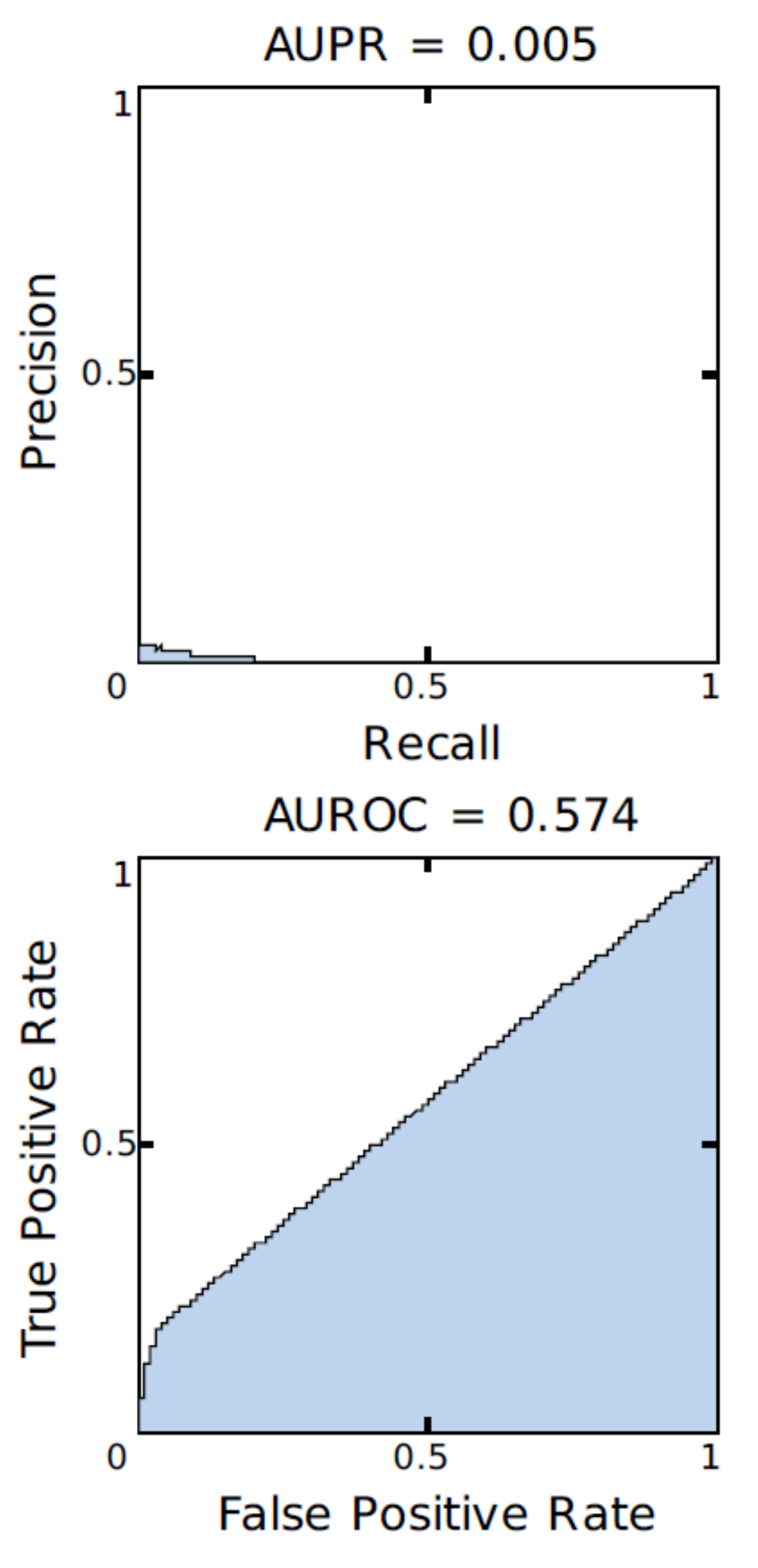} 
\caption{performance on data size = 300}
\label{fig:curvesizeb}
\end{subfigure}
\begin{subfigure}{0.5\textwidth}
\includegraphics[scale=0.4]{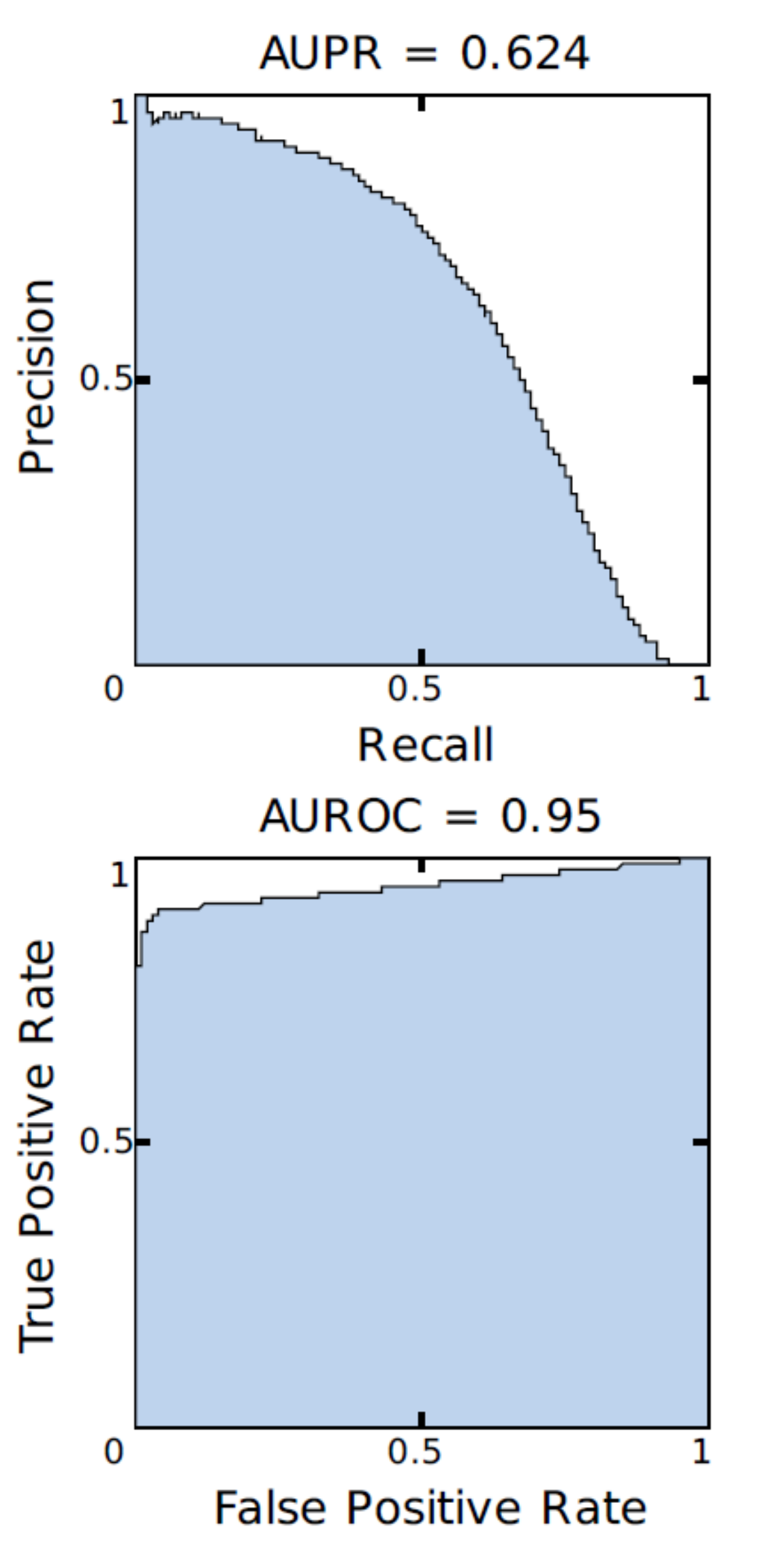}
\caption{performance on data size = 3\,600}
\label{fig:curvesizea}
\end{subfigure}
\caption{Precision-Recall Curve and ROC Curve of the prediction of ODE model with FCDS(8,8) scheme on data size 3\,600 and 300 respectively.}
\end{figure*}

In many cases, some knowledge of the biological processes underlying a particular data set will already be available. This can be used to formulate additional constraints to be included in the optimization problem or to add a regularization term to the objective function:

\begin{equation}
\begin{split}
\min_{\tilde{\mathbf A}\in \mathbb R^{n\times (n+1)}}   J(\tilde{\mathbf A})  &:= \frac{1}{2R}\|\mathbf{D_y} - \mathbf{D_x} \tilde{\mathbf A}^T\|_F^2 + \frac{\alpha}{2R}\|\mathbf A\|^2_F\\ \mbox{\ \ subject to}\\
a_{kl} &= 0,  \ \forall (k,l)\in \boldsymbol{C}
\end{split}
\label{P1}
\end{equation}

\noindent where $a_{kl}$, an entity of $\mathbf A$, and $\boldsymbol{C}\subset \{(i,j)|i,j\in \{1,..,n \}\}$ contain all constraints. $\alpha$ can be determined via cross validation. 

We use the \textit{Lagrange multipliers} $(\lambda)$ method to solve the above problem. We introduce a new variable $\lambda$ into the objective function. Thus we have:

\begin{equation} \label{Lag}
L(\tilde{\mathbf A},\mathbf{\lambda})  = \frac{1}{2R}\|\mathbf{D_y} - \mathbf{D_x} \tilde{\mathbf A}^T\|_F^2 + \frac{\alpha}{2R}\|\mathbf A\|^2_F  + \frac{1}{R}\sum_{(k,l)\in \boldsymbol{C}} \lambda_{kl}a_{kl}
\end{equation}
To find the global minimum, we should solve the following linear system.
\begin{equation}\label{linear}
\left\{\begin{array}{rcl}  \frac{\partial L(\tilde{\mathbf A},\mathbf{\lambda})}{\partial \tilde{\mathbf A}^T} &=& \mathbf 0  \\\\  \frac{\partial L(\tilde{\mathbf A},\mathbf{\lambda})}{\partial \lambda_{kl}} &=& 0 ,\ \forall (k,l)\in \boldsymbol{C} \end{array}\right..
\end{equation}
In order to solve this linear system, we use a \textit{vectorization operator}.
\begin{definition}[vectorization operator]\label{vec}
Let $\mathbf A = \mathbf{[a_1,...,a_i,...,a_n]}\in \mathbb R^{m\times n}$ and $\mathbf{a_i}\in \mathbb R^{m\times 1}$ be the i-th column of $\mathbf A$; the  \textit{vectorization operator} $vec:\mathbb R^{m\times n} \to \mathbb R^{mn\times 1}$ maps $A$ into a column vector by queuing the column vectors of $\mathbf A$ to the rear of the queue one by one: $$ vec(\mathbf A) = \left[ \begin{array}{c} \mathbf{a_1} \\ \mathbf{a_2} \\ \vdots \\ \mathbf{a_n} \end{array}\right]\in \mathbb R^{mn\times 1}.
$$ \end{definition}
With the vectorization operator, the linear system (\ref{linear}) can be written in a matrix form:

$$
\left[\begin{array}{cc}
\mathbf P & \mathbf{E}_{\boldsymbol{C}} \\ 
\mathbf{E}^T_{\boldsymbol{C}} & \mathbf 0
\end{array}\right]
\left[\begin{array}{c}
vec(\tilde{\mathbf A}^T) \\ 
\boldsymbol{\lambda}
\end{array}\right]
=
\left[\begin{array}{c}
vec(\mathbf{D_x}^T\mathbf{D_y}) \\ 
\boldsymbol{0}
\end{array}\right]
$$
\noindent where
\begin{align*}
\mathbf P\ &=\mathbf I_{n+1}\otimes \left(\mathbf{D_x}^T\mathbf{D_x} + \alpha \hat{\mathbf E}\right)\in \mathbb R^{(n+1)^2\times (n+1)^2},\\
\mathbf{E}_{\boldsymbol{C}}&=\left[...,vec(\mathbf{E}_{kl}),...\right]\in \mathbb R^{(n+1)^2\times |\boldsymbol{C}|},\\
\boldsymbol{\lambda}\ &=[...,\lambda_{kl},...]^T\in \mathbb R^{|\boldsymbol{C}|\times 1},\\
\mathbf{E}_{kl} &= [e_{ij}]\in\mathbb R^{(n+1)\times n} \text{\ with\ } e_{ij} = \delta^k_i\delta^l_j, i,j = 1,...,n.
\end{align*}

\noindent in which $(k,l)\in \boldsymbol{C}$, $|\boldsymbol{C}|$ is the number of elements or cardinality of set $\boldsymbol{C}$ and $\delta_i^j$ is the Kronecker delta.

The coefficient matrix
$$\left[\begin{array}{cc}
\mathbf P & \mathbf{E}_{\boldsymbol{C}} \\ 
\mathbf{E}^T_{\boldsymbol{C}} & \mathbf 0
\end{array}\right]
$$ is called the \textit{Karush-Kuhn-Tucker (KKT)} matrix, and it is nonsingular if and only if  $\mathbf{P+ E_C E_C^T}$ is positive definite.

\begin{figure*}[h]
\begin{subfigure}{0.5\textwidth}
\includegraphics[scale=0.4]{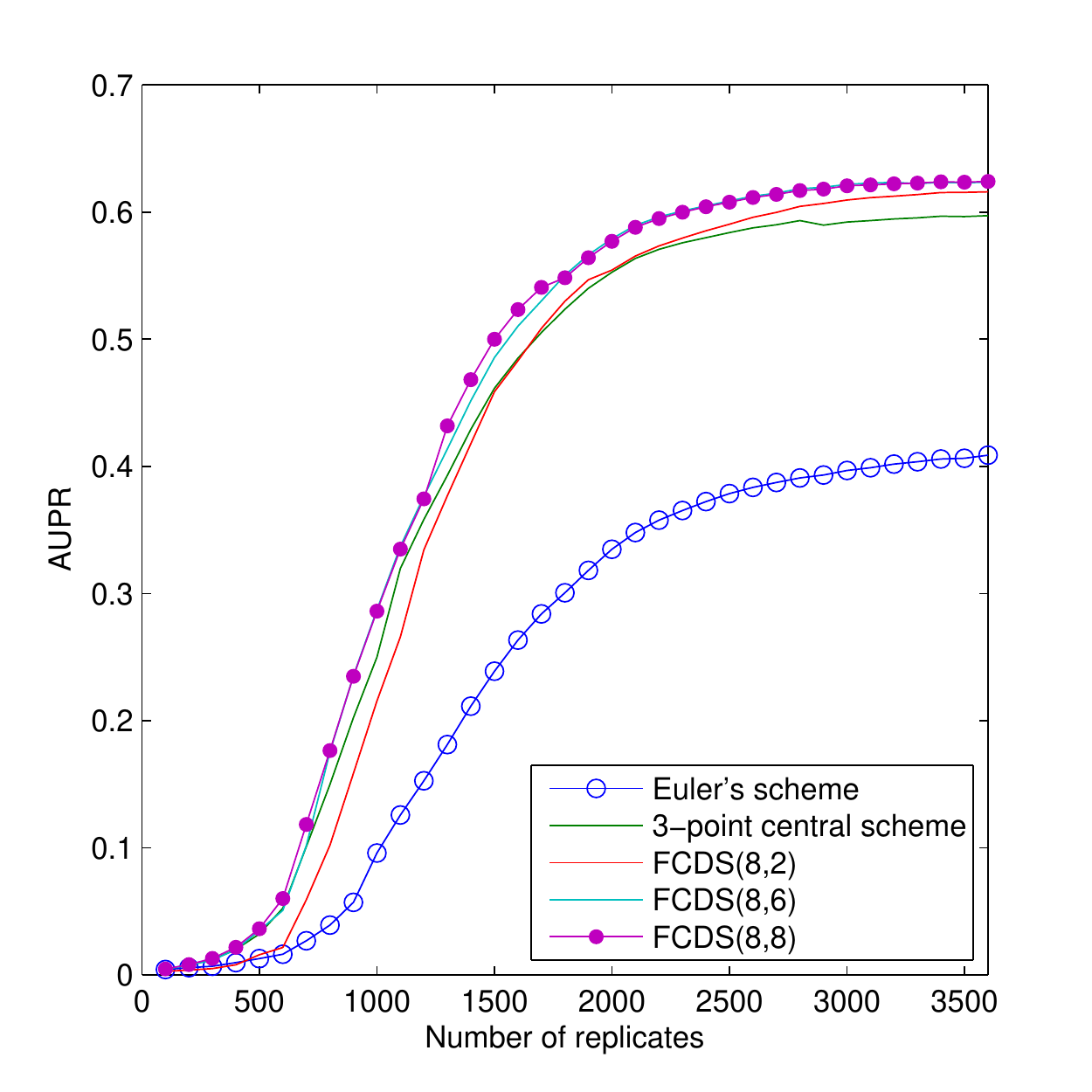} 
\caption{performance on data size = 300}
\end{subfigure}
\begin{subfigure}{0.5\textwidth}
\includegraphics[scale=0.4]{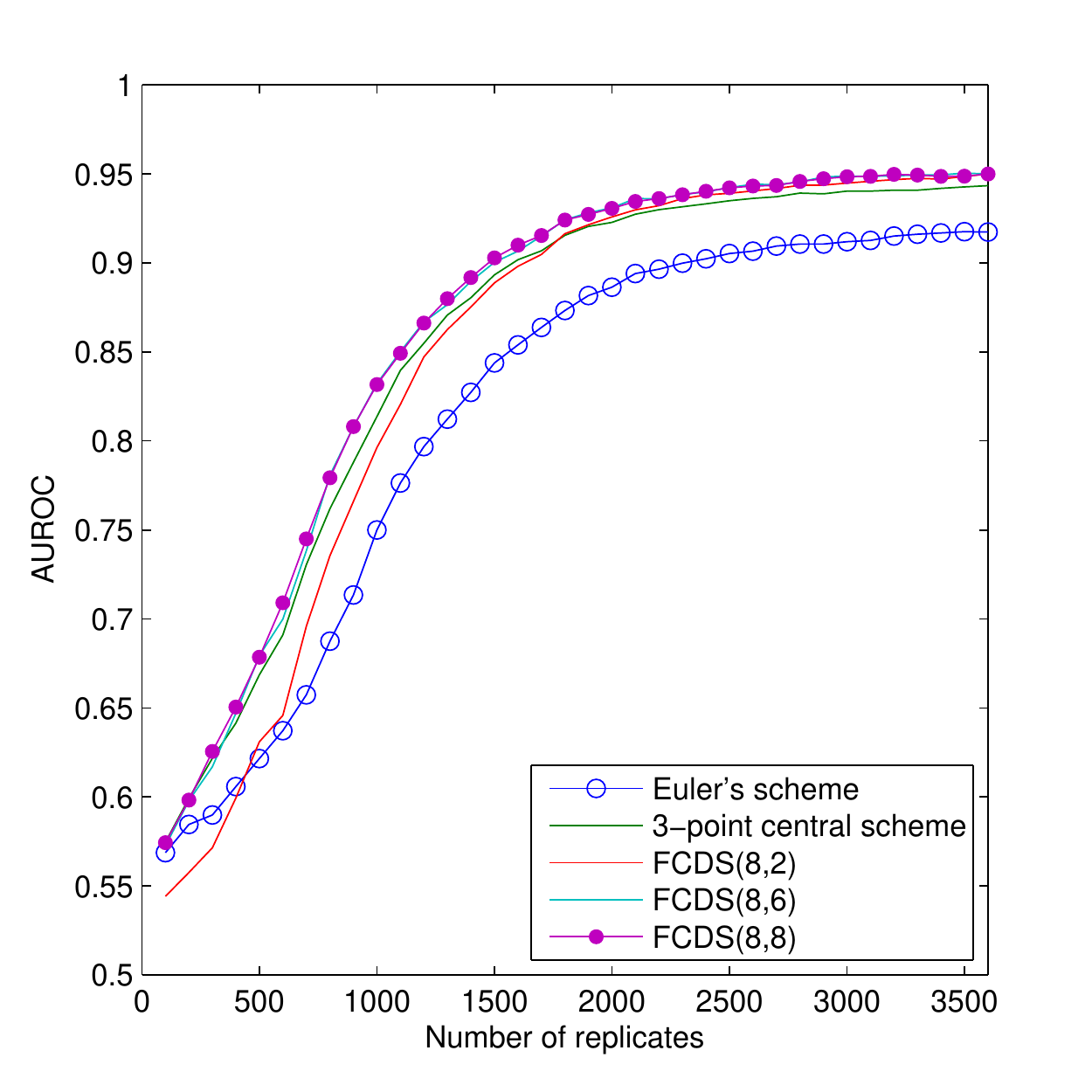}
\caption{performance on data size = 3\,600}
\end{subfigure}
\caption{Performance of ODE model without pipeline on different data sizes, with derivatives estimated by Euler's scheme, 3-point central scheme, FCDS(8,2), FCDS(8,6), and FCDS(8,8) schemes. Shown are AUROC and AUPR.}
\label{figdatasize}
\end{figure*}

\subsection{Numerical Differentiation of Noisy Data} \label{NoisyDiff}

Derivative matrices, in most cases, should be estimated from the observed $\mathbf x$ data which is  noisy due to measurement errors. Instead of the finite difference formulas such as the \textit{explicit Euler scheme} or the \textit{3-point central difference scheme} that are not a good choice for noisy data, we propose the use of polynomial fitting rather than interpolation, and only central difference schemes, which yield greater accuracy~\cite{deuflhard2002scientific}.

\textit{The main idea is to fit a polynomial locally with a few neighbor points and then differentiate the fitted polynomial.}

\begin{theorem}[Fitted Central  Derivative Scheme] \label{FCDS}
Let $P_n(t)$ be a polynomial of order $n$:
$$P_n(t) = a_0 + a_1(t-t_0) + ...+ a_n(t-t_0)^n,$$ 
fitted into $(t_0,x(t_0))$ and its $m=2k$ neighbor nodes:
$$
\begin{array}{lllllll} t_0-kh & ... & t_0-h & t_0 & t_0 + h & ...& t_0 + kh \\  x(t_0-kh) & ... & x(t_0-h) & x(t_0) & x(t_0 + h) & ...& x(t_0 + kh)\end{array}
$$ then the derivative of $x$ at the point $t = t_0$ can be approximated by $a_1$, which can be solved from the following linear system: \begin{equation} \label{solution}
(V^TV + \lambda I)\left[ \begin{array}{c}a_0\\a_1\\\vdots\\a_n\end{array}\right] = V^T\left[ \begin{array}{c}x(t_0-kh)\\ \vdots\\x(t_0)\\ \vdots\\x(t_0+kh)\end{array}\right]
\end{equation}
where $V$ is the Vandermonde matrix
$$
V = \left[ \begin{array}{ccccc} \vdots& \vdots &\vdots &\vdots& \vdots \\ 1 & -ih & (-ih)^2 & \hdots & (-ih)^n \\  \vdots& \vdots &\vdots &\vdots& \vdots \end{array}\right]\in \mathbb R^{(m+1)\times(n+1)},
$$ $ i = -k,...,0,...,k$
\end{theorem}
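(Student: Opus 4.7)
The plan is to reduce the statement to a standard Tikhonov-regularized least-squares derivation, exploiting the fact that the local parametrization $t \mapsto (t-t_0)^j$ makes the desired derivative trivially accessible from the polynomial coefficients. Indeed, differentiating $P_n(t)=\sum_{j=0}^n a_j (t-t_0)^j$ term by term and evaluating at $t=t_0$ leaves only the $j=1$ term, so $P_n'(t_0)=a_1$. Hence, once one argues that the stated linear system produces the coefficient vector $(a_0,\ldots,a_n)^T$ of the (regularized) polynomial fit to the $2k+1$ sampled values, the approximation $x'(t_0)\approx a_1$ follows immediately.

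Next, I would set up the fitting problem explicitly. Evaluating $P_n$ at the node $t_0+ih$ gives $P_n(t_0+ih)=\sum_{j=0}^n a_j(ih)^j$, so stacking these for $i=-k,\ldots,k$ produces the matrix identity $V\mathbf{a}\,\approx\, \mathbf{x}$, where $V\in\mathbb{R}^{(m+1)\times(n+1)}$ is precisely the Vandermonde matrix in the statement and $\mathbf{x}=(x(t_0-kh),\ldots,x(t_0+kh))^T$. Because the data are noisy, we fit by minimizing the Tikhonov-regularized objective
\begin{equation*}
\Phi(\mathbf{a})\;=\;\tfrac12\|V\mathbf{a}-\mathbf{x}\|_2^2 \;+\; \tfrac{\lambda}{2}\|\mathbf{a}\|_2^2 .
\end{equation*}
A routine gradient computation yields $\nabla\Phi(\mathbf{a}) = V^T(V\mathbf{a}-\mathbf{x}) + \lambda\mathbf{a}$, and the first-order optimality condition $\nabla\Phi(\mathbf{a})=\mathbf{0}$ is exactly the linear system~(\ref{solution}).

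Finally, I would verify well-posedness: since $V^TV\succeq 0$ and $\lambda I\succ 0$ for $\lambda>0$, the coefficient matrix $V^TV+\lambda I$ is symmetric positive definite, so the system admits a unique solution, and $\Phi$ being strongly convex ensures this solution is the global minimizer. Reading off the second entry of $\mathbf{a}$ gives $a_1$, completing the identification $x'(t_0)\approx a_1$. For the limiting interpolation case $\lambda=0$ with $n=2k$, $V$ is square and (being Vandermonde with distinct nodes $ih$) invertible, recovering classical Lagrange interpolation; the derivation above therefore unifies the pure-interpolation and noisy-data regimes.

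The proof involves no real obstacle; the only thing to be careful about is the bookkeeping in the Vandermonde matrix (ordering the rows by $i=-k,\ldots,k$ so that the sign pattern of $(-ih)^j$ comes out correctly) and noting explicitly that the regularization term is what allows the scheme to be used when $V$ is near-singular — the situation that arises precisely in the noisy, small-$h$ regime that motivates the whole construction.
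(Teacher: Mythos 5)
Your proposal is correct and follows essentially the same route as the paper's own (much terser) proof: the paper also sets up the ridge-regularized least-squares fit of the local polynomial and invokes the standard normal equations to obtain the stated system, then reads off $P_n'(t_0)=a_1$. You merely make explicit the steps the paper leaves as ``well known'' (the gradient computation, the positive definiteness of $V^TV+\lambda I$, and the row/sign bookkeeping in the Vandermonde matrix), which is a faithful filling-in rather than a different argument.
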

\begin{proof}
The parameters in the polynomial can be fitted by the classical linear least squares regression:
$$
\min_{a_0,a_1,...,a_n\in\mathbb R} \frac{1}{2}\sum_{i=-k}^k|x(t-ik)-P_n(t-ih)|^2 + \frac{\lambda}{2} \sum_{i=0}^na_i^2
$$, 
of which the solution is well known, as shown in equation (\ref{solution}) in this theorem. 
Furthermore, the estimation of $dx/dt$ is:
$$
\left.\frac{d\mathbf x}{dt}\right\vert_{t = t_0} \approx \left.\frac{P_n(t)}{dt}\right\vert_{t=t_0} = a_1
$$
\end{proof}

We call this scheme the \textit{Fitted Central Derivative Scheme} and it is denoted $FCDS(m,n)$. $m$ is the window length of the moving fitting, indicating how many nodes are involved; $n$ is the order of the fitted polynomial and shows the accuracy of the difference scheme. Once $m$ and $n$ are given, the scheme can be determined. If $m\geq n$, the solution is unique. As $n$ decreases, the fitted polynomial becomes smoother and the accuracy of differentiation decreases. When $n=m$, the schemes are the classical central difference schemes, which, however, are the worst at noise control because they are deduced from Lagrange interpolation under the assumption that the curve passes exactly through these points, which is not the case for noisy data. If $m = n$ and $\lambda=0$, then the polynomial fitting collapses to a Lagrange interpolation, which yields the classical $(m+1)$-point central difference scheme. Fig.\ref{deri_ex} is an example illustrating that FCDS is more robust than Euler's scheme.

\subsection{Network inference} \label{pipelineSec}In the previous section
 we suggested an approach for dealing with noisy data. The introduction of FCDS  is key to coping with experimental and biological variability. We can now solve the linear system (\ref{linear}) to find the network underlying the data. This would be reasonable if the problem were not undetermined and the data were not noisy. However, since the optimization problem encountered for real data is typically underdetermined, marginal distributions for parameters show wide peaks and large confidence intervals. In many biological settings, available data are insufficient to unambiguously reconstruct the underlying network. In such situations, strict regularization of the objective function, using, for example, \textit{maximum parsimony}, or the inclusion of additional prior biological knowledge can be helpful (\cite{nar1}). In this section we show how we can drive solutions to sparse networks, and make the inference feasible even in the presence of substantial amounts of missing data. We show how prior knowledge can be formulated as a constrained optimization problem. We use a filter for a sparse ODE system based on three outlier detection techniques: \textit{generalized (extreme studentized deviate), {ESD} test}~\cite{Rosner1983} and \textit{ modified Z-Scores}~\cite{Boris1993}. An alternative is to use normal differential gene expression analysis~\cite{markowetz2007} (not very different from our approach) to avoid the need to fine-tune one of the parameters ($r$). However, this may detect too many or too few genes as outliers and thus reintroduce a parameter. We use a filtration step and approximate the derivative matrix numerically and simulate the dynamics of the model with parameters for each knockout, and then compare the simulation results at each step. The workflow of the proposed method is shown in Fig.~\ref{pipeline_graph}. The outlier detection first filters out unlikely links and thus some parameters in the ODEs are restricted to zero. The constrained optimization solution for identification of parameters described can then be applied. Thereafter, the knockout experiments can be simulated numerically. Unlike the steady-state knockout data, the ODE model can simulate the effects of knockout in an arbitrarily short time period. With the simulated transient knock-out data, Z-scores can be computed and normalized as final scores ranking the possible links. The time-series data $\mathbf X_r$ of gene expression level can be directly used for parameter optimization. The absolute values of the parameters were used as scores ranking confidence in the prediction of links: a larger $a_{ij}$ indicated a stronger influence of gene-$i$ on gene-$j$. The sign of the parameters tells whether the interaction is an inhibition or an activation. We filter out unlikely edges in the network by analyzing wild-type data and knock-out data. This step shrinks the parameters searching space, thus reducing the data size required for fitting.
 If gene-$i$ has an edge (or a path) leading to gene-$j$ and is knocked out in the experiment, the knock-out steady-state gene expression data of gene-$j$, denoted as $x_j^{ko_i}$, would be expected to change significantly from its wild-type level $x_j^{wt}$. So the Generalized ESD test would be able to detect the deviation $x_j^{ko_i}-x_j^{wt}$ as an outlier, and the modified Z-score will also be high. 

If gene-$i$ does not have many direct or indirect interactions with gene-$j$ and is knocked out in the experiment, it would be expected that  $x_j^{ko_i}$ will not change significantly, excepting noisy fluctuation from its wild-type level $x_j^{wt}$, so that the Generalized ESD test will not be able to detect the deviation $x_j^{ko_i}-x_j^{wt}$ as an outlier and the modified Z-score will not be high. 
There are two parameters to be specified in the filtration process: the upper bound of number of outliers $r$ and significance level $\alpha$. 
In the context of a network, the upper bound of number of outliers $r$ is the limit of in-degrees of all gene nodes. Since only up to $r$ outliers will be detected, only up to $r$ regulators per gene can be found. One can choose $r$ based on the size of the network to be inferred, prior knowledge or subjective expectations. One should also consider the number of data sets available, since a larger $r$ yields more non-zero parameters in ODE, which then will require more data for fitting. 
Significance level $\alpha$ ({type I error}) and $\beta$ (type II error) can be interpreted as:

$\alpha $ : P (an edge is detected when there is no edge)

$\beta$ : P (an edge is not detected but there is  an edge)

We need to choose a \textbf{large} significance level $\alpha$ when the outlier detection technique is applied to filtration, whereas we need to choose a \textbf{small} significance level $\alpha$ when only the outlier detection technique is applied in finding possible interactions.

\section{Results}
\label{performance}

\subsection{Data simulation}
We evaluate the performance of HiDi and other methods on six benchmark datasets, each consisting of a compendium of gene expression data, a gold standard set of verified interactions which we ideally would like to reconstruct. Expression data are either simulated or experimentally measured under a wide range of perturbations.  Data simulation has been done using GeneNetWeaver (GNW) (\cite{LIS,EPFL-ARTICLE-166759}), an \textit{in silico} (numerical) simulator containing  dynamic models of gene regulatory networks of \textit{E.coli}~\cite{gama2008regulondb} and \textit{S.cerevisiae}~\cite{Balaji2006213}, including a thermodynamic model of transcriptional regulation, mRNA and protein dynamics, enabling it to generate gene expression data. Table~\ref{perf_data} summarizes the statistics pertaining to these networks. We simulated 3 different types of data: gene expression, time course gene expression and perturbation data at steady state.


\begin{table}[!htbp]
{\small
\centering
    \begin{tabular}{|c|c|c|c|c|}
    \hline
Network&	TF&	Genes &		Chips&	 Verified links\\\hline
E. coli Network from&1565&	1565 &		1565&	 3758\\\hline
D4 Multi-net* 1&100&	100 &		100&	 176\\\hline
D4 Multi-net* 2&	100&	100 &		100&	 249 \\\hline
 D4 Multi-net* 3&	100&	100 &		100&	 195\\\hline
	D4 Multi-net* 4	&	100&	100 &		100&	211\\\hline
	D4 Multi-net* 5&	100&	100 &		100&	193 	\\ \hline
\end{tabular}
\vspace{2mm}
\caption{\label{perf-six}Six Benchmark data sets used in our study. *'D4 Multi-net' stands for 'DREAM4 Multifactorial Network'.}
}
\end{table}

\subsection{Performance}

We compare HiDi to several other GRN inference methods. We use results from the DREAM challenge and compare inferred networks to known regulations to assess the number of true positives (TP, the number of known regulations among the top predictions), false positives (FP, the number of predicted regulations in the top K which are not known regulations), false negatives (FN, the number of known interactions which are not among the top predictions) and true negatives (TN, the number of pairs not among the top predictions which are not known regulations). We then compute precision (TP/(TP $+$ FP)), recall (TP/(TP $+$ FN)). We assess globally how these statistics vary with the number of top predictions by computing the area under the receiver operating characteristic (ROC) curve and the precision-recall curve (AUROC and AUPR respectively). We also compute a p-value for the AUROC and AUPR scores, based on all DREAM participants' predictions as used by the DREAM5 organizers to rank the teams. This involves randomly drawing edges from the teams' prediction lists and computing the probabilities of obtaining an equal or larger AUPR (or AUROC) by chance. Finally, we compute a score for our inference method by integrating the AUROC and AUPR p-values, as was suggested by the DREAM organizers.

\subsection{DREAM4 and DREAM5 Network Challenges} \label{dream4re}

The wild-type data set, knock-out data set and 10 replicates of the time-series data set have been provided for each network. Since only 10 replicates of the time-series data are provided, which is insufficient to determine 10\,100 parameters for a 100-node network, we used our pre-filtration procedure to dramatically reduce the searching dimension. The effectiveness of filtration on one of the networks \textit{network-1} in the DREAM4 challenge is shown in Fig. 1 in the supplementary document.
The original 9\,900 possible arcs in a 100-gene network were reduced to a few hundreds, so that the required data for parameter estimation was dramatically reduced. 

We first display the effect of choosing the significance level $\alpha$ in the pre-filtration procedure. Then we compare the performance of the pipeline method being presented here with those of participating teams, using the data provided by the challenge organizers.

\begin{table}[!htbp]
{\small
\centering
    \begin{tabular}{c|cc}
    & AUPR                      & p-value of AUPR                    \\ \hline
    Net1        &   \textbf{ 0.630} (T395: 0.536) &    \textbf{1.60E-150} (T395: 1.23E-121 ) \\
    Net2        &    \textbf{0.448} (T296: 0.396) &   \textbf{ 8.31E-206} (T296: 1.80E-177)  \\
    Net3        &    \textbf{0.413} (T395: 0.390) &    \textbf{7.94E-101} (T395: 5.20E-95)  \\
    Net4        &    \textbf{0.491} (T271: 0.403) &    \textbf{6.41E-117} (T271: 2.93E-95)   \\
    Net5        &    0.251 (T532: 0.326) &    2.78E-56 (T532: 3.82E-74)  \\
  
    \end{tabular}
\\
\vspace{5mm}
\centering
    \begin{tabular}{c|cc}
     & AUROC                     & p-value of AUROC                           \\ \hline
    Net1        &    0.916 (T548: 0.917) &    2.94E-41 (T548: 1.92E-41)  \\
    Net2        &    \textbf{0.868} (T395: 0.801) &   \textbf{ 3.65E-64} (T395: 4.33E-45)  \\
    Net3       &    0.797 (T515: 0.844) &    5.27E-39 (T515: 2.84E-51)   \\
    Net4         &    \textbf{0.852} (T549: 0.848) &    \textbf{4.20E-45} (T549: 2.56E-44)    \\
    Net5        &    \textbf{0.803} (T548: 0.778) &    \textbf{2.36E-35} (T548: 1.82E-30) \\
    \end{tabular}
\\
\vspace{5mm}
   \begin{tabular}[htbp]{|c|c|}\hline
    Overall AURR score  &    \textbf{125.345}  (Team395: 103.068)            \\
    Overall AUROC score   &    \textbf{44.250} (Team548: 40.962)      \\
    Overall score       & \textbf{84.798} (Team395: 71.589)                                 \\\hline
    \end{tabular}
    \\
\vspace{5mm}
\caption{Performance of the pipeline on 5 networks in the DREAM4 challenge. (\textit{TXXX} refers to the top score for a team participating in the DREAM Challenge.) The top teams in each subcategory are in brackets, together with their scores; the bold item indicates that the score has exceeded the top one. The performance reported for HiDi is achieved with these settings: $\alpha=0.9,\ r = 20$ in pre-filtration, FCDS(8,6) scheme in ODE model and $\ dt = 0.1$ in the post-modeling stage.}\label{perftop}
}
\end{table}

Fig.~\ref{figalpha} shows that both the precision ${{TP}}/{{(TP + FP)}}$ and the False Negative Rate ${{FN}}/{{(FN + TP)}}$ decreased as $\alpha$ increased in the prediction of \textit{Network 1}. A pre-filtration is good if the False Negative Rate is low, while a prediction is bad if the precision is low. This is consistent with the statement in Section~\ref{pipelineSec} that a large $\alpha$ is recommended in the pre-filtration while a small $\alpha$ is advisable for separate use.

Table~\ref{perf_data_alpha} shows the performance on the five networks for different values of $\alpha$.

 Now, we compare the performance with those of participating teams. 

Compared to other teams we ranked first, with an overall score of 71.589, and we also came first in the sub-ranking of AUPR, with a score of 103.068; Team 548 ranked first in the AUROC, with a score of 40.962. Table~\ref{perftop} shows the performance of the proposed method on the five 100-gene networks with respect to the top performers in the challenge.
	
\begin{table}[!htbp]
{\small
\centering
    \begin{tabular}{|l|c|ccc|}
    \hline
\multicolumn{2}{|c|}{} &$\alpha=0.01$&$\alpha=0.5$& $\alpha=0.9$ \\
     \hline
\multirow{5}{*}{\rotatebox{90}{AUPR}} & Net1 &	0.571 	&	0.627 	&	0.630 	 \\
&	Net2	&	0.430 	&	0.460 	&	0.448 	\\
&	Net3	&	0.316 	&	0.407 	&	0.413 	\\
&	Net4	&	0.389 	&	0.467 	&	0.491 	\\
&	Net5	&	0.173 	&	0.249 	&	0.251 	\\ 
\hline
\multirow{5}{*}{\rotatebox{90}{AUROC}}&	Net1	&	0.878 	&	0.894 	&	0.916 		\\
&	Net2	&	0.783	&	0.854	&	0.868	 \\
&	Net3	&	0.769 	&	0.813 	&	0.797 	 \\
&	Net4	&	0.807 	&	0.852 	&	0.852 	 \\
&	Net5	&	0.711 	&	0.773 	&	0.803 	 \\
\hline
\multicolumn{2}{|c|}{{Overall AUPR Score}}   & 106.878 &	125.042	&	125.345 
	\\
 \hline 
\multicolumn{2}{|c|}{{Overall AUROC Score}}  &	31.969 	&	42.378 	&	44.250 		\\ \hline
\multicolumn{2}{|c|}{Overall  Score} &	69.424 	&	83.710 	&	84.798 
\\ \hline
\end{tabular}
\vspace{2mm}
\caption{\label{perf_data_alpha}Performance on the DREAM4 challenge networks for different  values of  $\alpha$. All values have been calculated using the scoring scripts provided in the Dream challenge.}
}
\end{table}

We then use HiDi without filtration, provided adequate data are available. The software \textit{GeneNetWeaver (GNW) 3.1.1 Beta} can be set to generate the time-series data for the five networks in the DREAM4 Challenge, with the same model and the same noise level as in the challenge. This allows us to produce as much data as we want and to show how the performance is improved with more data, as is evident from Table~\ref{perf_data}.
 The method, with filtration, has produced quite a good result, especially with such a limited amount of data. However, without filtration it can achieve an even better performance, despite the fact that the amount of data required would be large. One can also see how poor its performance would be if applied on limited data without filtration.
\subsection{Reconstruction of the \textit{E.coli} GRN} \label{reecoli}
We tested the impact of data size on performance with the 1\,565-node \textit{E.coli} network. Fig.~\ref{figdatasize} shows that increase of replicates produces S-shape curves of both AUPR and AUROC. It indicates that the fitting problem always requires adequate data; with more data, one can expect a better performance. However, the performance has an upper limitation despite the fact that surplus data are supplied. In this test, the upper limitations of AUPR and AUROC for the FCDS(8,8) scheme are 0.624 and 0.95 respectively.

Fig.~\ref{fig:curvesizea} shows the Precision-Recall Curve and Receiver-Operating Characteristic (ROC) Curve when this limitation is achieved; Fig.~\ref{fig:curvesizeb} shows that when there is not enough data, the prediction is no better than mere random guesswork. 

Euler's scheme, the 3-point central scheme, the FCDS(8,2), FCDS(8,6), and FCDS(8,8) schemes have truncation errors on the order of  $O(h)$, $O(h^2)$, $O(h^2)$, $O(h^6)$, and $O(h^8)$ respectively. Fig.~\ref{figdatasize} also shows that a higher order of truncation error leads to a better performance. As regards details, Euler's scheme $\left(O(h)\right)$ turned in the worst performance. FCDS(8,2) and the 3-point central scheme (both with $O(h^2)$) yielded a similar performance in some ranges, but FCDS(8,2) eventually surpassed the latter.
 
The performances of FCDS(8,6) and FCDS(8,8) are very similar, which indicates that a truncation error of $O(h^6)$ is sufficient for this test and that an increase in accuracy leads to no improvement.

To further demonstrate the applicability of our method to real biological data, we applied HiDi to the four networks from the DREAM 5 challenge, consisting of one derived by in–silico simulation and three obtained experimentally from three species. Due to limited knock-out data provided for Network 2 and Network 4, we could not run the model for these networks, only for network 1 and network 3. The result has been shown in table ~\ref{perf_d5}. The reported performance for HiDi is achieved with these settings: $\alpha=0.9,\ r = 20$ in pre-filtration, FCDS(8,2) scheme in the ODE model and $\ dt = 0.1$ in the post-modeling stage.

\begin{table}[!htbp]
{\footnotesize
\centering
    \begin{tabular}{|c|c|c|c|c|}
    \hline
Method & 	Net1AUROC & Net1AUPR & Net3AUROC & Net3AUPR	\\\hline
GENIE3  & 0.815	  &	0.291&	0.617&	0.093 \\\hline
Other 2 &	0.78  &	0.245 &	0.671 & 	0.119 \\\hline
HiDi &	0.792   &	0.272  & 	0.638   &	0.105  \\\hline
 \end{tabular}
\vspace{2mm}
\caption{\label{perf_d5}Comparison between HiDi and top performing methods in DREAM5 challenges.}
}
\end{table}

\subsection{Computational time complexity} \label{comptime}

The computation time was evaluated by varying the size of the ODE system from 500 to 10\,000  variables, which implied a calculation of over 0.25 million to 100 million parameters to be optimized.

 The time for reconstruction of the 500-gene network was 0.6s, and 1766s (about 30min) for the 10\,000-gene network, in which over 100 million parameters were optimized; the $(\log_{10}n , \log_{10}t)$ was fitted into a straight line which shows that the time complexity of this algorithm is about $O(1/10^{7.6}\cdot n^{2.7})$.

\begin{table}[!htbp]
{\small
\centering
\begin{tabular}{|l|c|cccc|}
   \hline
    \multicolumn{2}{|c|}{Data Size} &R=10& R=110& R=1110&R=2110 \\ 
    \hline
     \multirow{5}{*}{\rotatebox{90}{AUPR}} & Net1 & 0.168 	&	0.496 	&	0.605 	&	\textbf{0.612} \\
    & Net2 & 0.100 	&	0.271 	&	0.429 	&	\textbf{0.438} \\
    &	Net3	&	0.074 	&	0.352 	&	0.487 	&	\textbf{0.496} \\
&	Net4	&	0.124 	&	0.409 	&	0.540 	&	\textbf{0.553}  	\\
&	Net5	&	0.046 	&	0.293 	&	0.434 	&	\textbf{0.441}  	\\ \hline
     \multirow{5}{*}{\rotatebox{90}{AUROC}}&	Net1	&	0.774 	&	0.863 	&	0.925 & \textbf{0.935}	 	\\
&	Net2	&	0.654 	&	0.744 	&	0.870 	&	\textbf{0.881}	  \\
 &	Net3	&	0.641 	&	0.800 	&	0.851 	&	\textbf{0.864}	  \\
&	Net4	&	0.720 	&	0.820 	&	0.890 	&	\textbf{0.913}	  \\
&	Net5	&	0.644 	&	0.789 	&	0.871 	&	\textbf{0.881}	  \\ \hline
\multicolumn{2}{|l|}{{Overall AUPR}} & 22.270 	&	94.697 	&	136.954 	&	\textbf{139.813}   	\\ \hline 
\multicolumn{2}{|l|}{{Overall AUROC}} &	14.893 	&	34.335 	&	52.182 	&	\textbf{55.495}   	\\ \hline
\multicolumn{2}{|l|}{Overall Score} &	18.581 	&	64.516 	&	94.568 	&	\textbf{97.654}   	\\ \hline
    \end{tabular}\\

\vspace{5mm}
\centering
\begin{tabular}{|l|c|cc|}
\hline
    \multicolumn{2}{|c|}{Data Size} &R*=10& R**=10 \\ 
    \hline
     \multirow{5}{*}{\rotatebox{90}{AUPR}} & Net1 & \textbf{0.630} 	&	0.536  \\
    & Net2 & \textbf{0.448} 	&	0.396 \\
    &	Net3	&	\textbf{0.413} 	&	0.390  \\
&	Net4	&	\textbf{0.491} 	&	0.403 \\
&	Net5	&	0.251 	&	\textbf{0.326} \\ \hline
     \multirow{5}{*}{\rotatebox{90}{AUROC}}&	Net1	&	0.916 	&	\textbf{0.917} \\
&	Net2	&	\textbf{0.868} 	&	0.801 		  \\
 &	Net3	&	0.797 	&	\textbf{0.844} 		  \\
&	Net4	&	\textbf{0.852} 	&	0.848 		  \\
&	Net5	&	\textbf{0.803} 	&	0.778 		  \\ \hline
\multicolumn{2}{|l|}{{Overall AUPR}} & \textbf{125.345} 	&	103.068 	   	\\ \hline 
\multicolumn{2}{|l|}{{Overall AUROC}} &	\textbf{44.250} 	&	40.962 	   	\\ \hline
\multicolumn{2}{|l|}{Overall Score} &	\textbf{84.798} 	&	71.589 	   	\\ \hline
    \end{tabular}\\
\caption{\label{perf_data}Performance on different data sizes. R is the number of replicates of time-series data. Columns under `R' are the performances without filtration; R* represents performances with filtration and R** represents the performances of the top participants in the DREAM4 Challenge, as shown in Table~\ref{perftop}. The highest scores are marked in bold.}
}
\end{table}

The distribution of the computation time is shown in Fig.2 in the supplementary document. 
The linear system was solved by the Matlab backslash ($\backslash$) operator, which is quite efficient and stable. The cross validation has to solve up to 20 times the linear system and to compute the Frobenius norm 20 times to choose a better regularization term $\alpha$ in equation~\ref{P1}. As a consequence the cross validation takes a long chunk of the actual calculation time, but this is not relevant to the time complexity of the core algorithm. Note that the networks and data for these tests were randomly generated, since generating dynamic data for large networks is another challenging task. 

\begin{figure}
\centering
  \includegraphics[width=0.7\linewidth]{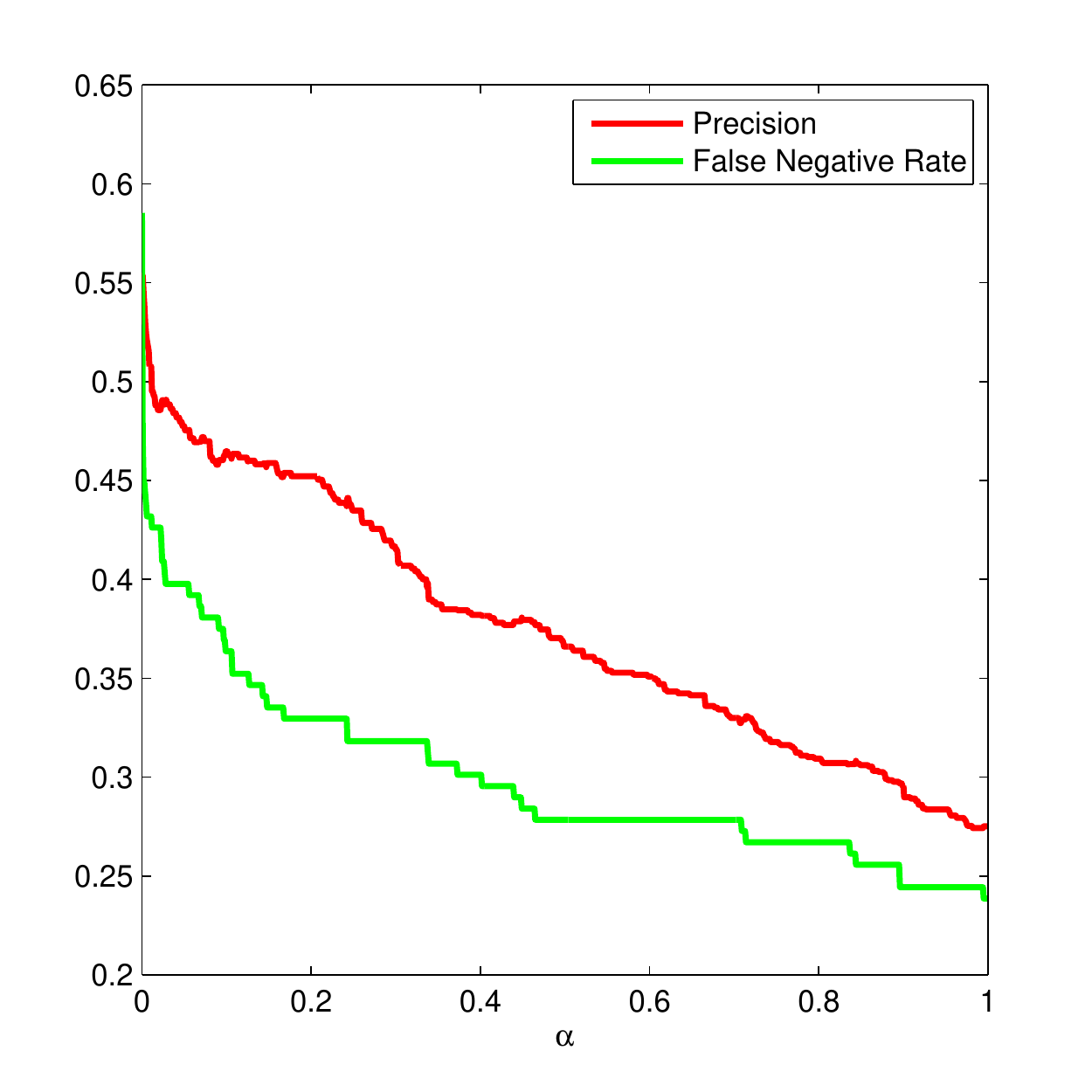}
  \caption{Precision and False Negative Rate (FNR) on network 1 in the DREAM4 challenge vs significance level $\alpha$. The precision decreases as $\alpha$ increases, indicating an increasing \textit{type I error}; a decreasing FNR indicates a decreasing \textit{type II error}.}\label{figalpha}
  \end{figure}
  
\section{Conclusions}

We focused on the identification of parameters for the linear ODE systems. A computationally cheap solution of both unconstrained and constrained optimization problems was devised using the Frobenius norm, so that we could determine parameters in large linear ODE systems from experimental data. Because the experimental data are usually noisy, a series of fitted central difference schemes were provided to handle the derivatives in the ODE system. For sparse ODE systems, we devised a way for outlier detection to introduce sparsity into the system. 

We applied the unconstrained solution to reconstruct gene regulatory networks, the dynamics of which were simplified into a linear ODE system. The results on the 1\,565-gene \textit{E.coli} regulatory networks, of which about 2.5 million parameters have been determined, showed that the proposed method would turn in a satisfactory performance if sufficient data were provided, and that noise-robust difference schemes play an important role in improving the method's performance. In this test, the schemes identified as having better noise to robustness ratios in the theoretical analysis yielded better performances as the noise level increased. 

We showed that the noise analysis can give us prior knowledge that can be used in choosing a finite difference scheme before actually conducting the numerical experiment.
 The constrained solution with filtration to introduce sparsity was evaluated in the  DREAM4 \textit{In Silico 100-gene Network Challenge}. The filtration process demonstrated a powerful ability to reduce the number of parameters in the ODE system from about 10\,000 to about 500. With the data supplied, the performance topped that of all other participants in the DREAM challenge. While filtration is able to reduce the dimension of the required data remarkably, it produces errors and the significance level controls the effectiveness of filtration. A lower significance level filters out more parameters while it has a higher risk of falsely filtering out nonzero parameters. With more data, the unconstrained solution without filtration improved even further the overall scores of the proposed method. This indicates a trade-off that we explored between filtration and data availability. 

Even while demonstrating the advantages of HiDi in tackling the challenge of network reconstruction, we did not find any clear connection between accuracy and the topological properties of the networks being reconstructed. This lack of correspondence between a network feature and the power of our algorithm is compatible with previous findings in~\cite{crowds} for robust gene network inference in the same context of the DREAM Challenge, findings that establish that no method can outperform all others and that the best approach is an heuristic incremental joint effort among the best algorithms. We think that HiDi contributes to this effort.

\section{Acknowledgments}

NAK was supported by a Vinnova VINNMER fellowship, Stratneuro. HZ was supported by the Swedish Research Council. The founders played no role in the design of the study, in data collection and analysis, in the decision to publish, or in the preparation of the manuscript.


\begin{thebibliography}{10}

\bibitem{Balaji2006213}
S~Balaji, M~Madan Babu, Lakshminarayan~M Iyer, Nicholas~M Luscombe, and
  L~Aravind.
\newblock Comprehensive analysis of combinatorial regulation using the
  transcriptional regulatory network of yeast.
\newblock {\em Journal of molecular biology}, 360(1):213--227, 2006.

\bibitem{deuflhard2002scientific}
Peter Deuflhard and Folkmar Bornemann.
\newblock {\em Scientific computing with ordinary differential equations},
  volume~42.
\newblock Springer Science \& Business Media, 2012.

\bibitem{raey}
Christopher Fogelberg and Vasile Palade.
\newblock Machine learning and genetic regulatory networks: a review and a
  roadmap.
\newblock In {\em Foundations of Computational, Intelligence Volume 1}, pages
  3--34. Springer, 2009.

\bibitem{gama2008regulondb}
Socorro Gama-Castro, Ver{\'o}nica Jim{\'e}nez-Jacinto, Mart{\'\i}n Peralta-Gil,
  Alberto Santos-Zavaleta, M{\'o}nica~I Pe{\~n}aloza-Spinola, Bruno
  Contreras-Moreira, Juan Segura-Salazar, Luis Mu{\~n}iz-Rascado, Irma
  Mart{\'\i}nez-Flores, Heladia Salgado, et~al.
\newblock Regulondb (version 6.0): gene regulation model of escherichia coli
  k-12 beyond transcription, active (experimental) annotated promoters and
  textpresso navigation.
\newblock {\em Nucleic acids research}, 36(suppl 1):D120--D124, 2008.

\bibitem{Greenfield2010}
Alex Greenfield, Aviv Madar, Harry Ostrer, and Richard Bonneau.
\newblock Dream4: Combining genetic and dynamic information to identify
  biological networks and dynamical models.
\newblock {\em PloS one}, 5(10):e13397, 2010.

\bibitem{Grubbs1969}
Frank~E Grubbs.
\newblock Procedures for detecting outlying observations in samples.
\newblock {\em Technometrics}, 11(1):1--21, 1969.

\bibitem{TIGRESS2012}
Anne-Claire Haury, Fantine Mordelet, Paola Vera-Licona, and Jean-Philippe Vert.
\newblock Tigress: trustful inference of gene regulation using stability
  selection.
\newblock {\em BMC systems biology}, 6(1):145, 2012.

\bibitem{James2014}
James Henderson and George Michailidis.
\newblock Network reconstruction using nonparametric additive ode models.
\newblock {\em PloS one}, 9(4):e94003, 2014.

\bibitem{Boris1993}
Boris Iglewicz and David~Caster Hoaglin.
\newblock {\em How to detect and handle outliers}, volume~16.
\newblock Asq Press, 1993.

\bibitem{GP-DreamOnline}
Broad Institute.
\newblock Gp-dream network inference platform, 2014.
\newblock http://dream.broadinstitute.org/ [Accessed: March 5th, 2016].

\bibitem{GENIE2010}
Alexandre Irrthum, Louis Wehenkel, Pierre Geurts, et~al.
\newblock Inferring regulatory networks from expression data using tree-based
  methods.
\newblock {\em PloS one}, 5(9):e12776, 2010.

\bibitem{nar1}
Narsis~A Kiani and Lars Kaderali.
\newblock Dynamic probabilistic threshold networks to infer signaling pathways
  from time-course perturbation data.
\newblock {\em BMC {B}ioinformatics}, 15(1):250, 2014.

\bibitem{Madar2009}
Aviv Madar, Alex Greenfield, Harry Ostrer, Eric Vanden-Eijnden, and Richard
  Bonneau.
\newblock The inferelator 2.0: a scalable framework for reconstruction of
  dynamic regulatory network models.
\newblock In {\em Engineering in Medicine and Biology Society, 2009. EMBC 2009.
  Annual International Conference of the IEEE}, pages 5448--5451. IEEE, 2009.

\bibitem{Aviv2010}
Aviv Madar, Alex Greenfield, Eric Vanden-Eijnden, and Richard Bonneau.
\newblock Dream3: network inference using dynamic context likelihood of
  relatedness and the inferelator.
\newblock {\em PloS one}, 5(3):e9803, 2010.

\bibitem{crowds}
Daniel Marbach et~al.
\newblock Wisdom of crowds for robust gene network inference.
\newblock {\em Nature Methods}, 9:796–804, 2012.

\bibitem{LIS}
Daniel Marbach, Thomas Schaffter, Claudio Mattiussi, and Dario Floreano.
\newblock Generating realistic in silico gene networks for performance
  assessment of reverse engineering methods.
\newblock {\em Journal of computational biology}, 16(2):229--239, 2009.

\bibitem{markowetz2007}
Florian Markowetz, Dennis Kostka, Olga~G. Troyanskaya, and Rainer Spang.
\newblock Nested effects models for high-dimensional phenotyping screens.
\newblock {\em Bioinformatics}, 23:13:i305--i312, 2007.

\bibitem{Rosner1983}
Bernard Rosner.
\newblock Percentage points for a generalized esd many-outlier procedure.
\newblock {\em Technometrics}, 25(2):165--172, 1983.

\bibitem{EPFL-ARTICLE-166759}
Thomas Schaffter, Daniel Marbach, and Dario Floreano.
\newblock Genenetweaver: in silico benchmark generation and performance
  profiling of network inference methods.
\newblock {\em Bioinformatics}, 27(16):2263--2270, 2011.

\bibitem{siegenthaler}
Caroline Siegenthaler and Rudiyanto Gunawan.
\newblock Assessment of network inference methods: how to cope with an
  underdetermined problem.
\newblock {\em PloS one}, 9(3):e90481, 2014.

\bibitem{strebel}
Oliver Strebel.
\newblock A preprocessing method for parameter estimation in ordinary
  differential equations.
\newblock {\em Chaos, Solitons \& Fractals}, 57:93--104, 2013.

\bibitem{10.1371/journal.pone.0029165}
Matthieu Vignes, Jimmy Vandel, David Allouche, Nidal Ramadan-Alban, Christine
  Cierco-Ayrolles, Thomas Schiex, Brigitte Mangin, and Simon De~Givry.
\newblock Gene regulatory network reconstruction using bayesian networks, the
  dantzig selector, the lasso and their meta-analysis.
\newblock {\em PloS one}, 6(12):e29165, 2011.

\bibitem{Yeung30042002}
MK~Stephen Yeung, Jesper Tegn{\'e}r, and James~J Collins.
\newblock Reverse engineering gene networks using singular value decomposition
  and robust regression.
\newblock {\em Proceedings of the National Academy of Sciences},
  99(9):6163--6168, 2002.

\bibitem{Yip2010}
Kevin~Y Yip, Roger~P Alexander, Koon-Kiu Yan, and Mark Gerstein.
\newblock Improved reconstruction of in silico gene regulatory networks by
  integrating knockout and perturbation data.
\newblock {\em PloS one}, 5(1):e8121, 2010.

\end{thebibliography}
\end{document}